\newtheorem{thm}{Theorem}[section]
\newtheorem{defi}[thm]{Definition}
\newtheorem{lem}[thm]{Lemma}
\newtheorem{prop}[thm]{Proposition}
\newtheorem{rmk}[thm]{Remark}
\newtheorem{constr}[thm]{Construction}
\renewcommand{\vec}[1]{\bm{#1}}
\newcommand{\qedd}{\hspace*{\fill}$\Box$\medskip}
\def\deg{\hbox{\rm{deg\,}}}
\renewcommand{\vec}[1]{\bm{#1}}
\def\tr{\hbox{\rm{tr}}}
\def\wt{\hbox{\rm{wt}}}
\def\AI{\hbox{\rm{AI}}}
\def\dis{\hbox{\rm{d}}}
\def\Supp{\hbox{\rm{supp}}}
\def\Sup{{\rm{supp}}}
\begin{document}

\title{Constructing $2m$-variable Boolean functions with optimal algebraic
immunity based on polar decomposition of
$\mathbb{F}_{2^{2m}}^*$\thanks{Partially supported by National Basic
Research Program of China (2011CB302400).}}

\author{Jia Zheng\thanks{School of Mathematical Sciences, University of Chinese Academy of Sciences, Beijing 100049, China.
Email: zhengjia11b@mail.ucas.ac.cn},
 Baofeng Wu\thanks{Key
Laboratory of Mathematics Mechanization, AMSS, Chinese Academy of
Sciences,
 Beijing 100190,  China. Email: wubaofeng@amss.ac.cn},
 Yufu Chen\thanks{School of Mathematical Sciences, University of Chinese Academy of Sciences, Beijing 100049, China. Email:
 yfchen@ucas.ac.cn},
 Zhuojun Liu\thanks{Key Laboratory of Mathematics
Mechanization, AMSS, Chinese Academy of Sciences,
 Beijing 100190,  China. Email: zliu@mmrc.iss.ac.cn}}
 \date{}

\maketitle

\begin{abstract}
Constructing $2m$-variable Boolean functions with optimal algebraic
immunity based on decomposition of additive group of the finite
field $\mathbb{F}_{2^{2m}}$ seems to be a promising approach since
Tu and Deng's work. In  this paper, we consider the same problem in
a new way. Based on polar decomposition of the multiplicative group
of $\mathbb{F}_{2^{2m}}$, we propose a new construction of Boolean
functions with optimal algebraic immunity. By a slight modification
of it, we obtain a class of balanced Boolean functions achieving
optimal algebraic immunity, which also have optimal algebraic degree
and high nonlinearity. Computer investigations imply that this class
of functions also behave well against fast algebraic attacks.
\vskip .5em

\noindent\textbf{Keywords}\quad Boolean functions; Algebraic
immunity; Polar decomposition; Balanced; Nonlinearity.

\end{abstract}


\section{Introduction}
\label{intro}

Boolean functions play an important role in symmetric cryptography,
especially in the stream ciphers based on linear feedback shift
resisters (LFSRs). They can be used as building blocks in such key
stream generators as filter generator and combiner generator. Due to
the existence of different kinds of known attacks to stream ciphers,
Boolean functions that are useable should satisfy some main criteria
such as balancedness, high algebraic degree, high nonlinearity and
optimal algebraic immunity.

The notion of algebraic immunity was introduced in \cite{AI04} by
Meier et al. after the great success of algebraic attacks to such
well-known stream ciphers as Toyocrypt and LILI-128 \cite{AA03}. In
fact, the algebraic immunity of a Boolean function $f$ is the
smallest possible degree of the nonzero Boolean functions that can
annihilate $f$ or $f+1$. If it is not big enough, the multivariate
polynomial systems derived from the stream ciphers involving $f$
would be efficiently solved, and hence the secret key can be
recovered. This is just the clever idea of the standard algebraic
attacks introduced (improved, more definitely) by Courtois and Meier
\cite{AA03}. It can be proved that the best possible value of the
algebraic immunity of $n$-variable Boolean functions is
$\lceil\frac{n}{2}\rceil$ \cite{AA03}, thus functions attaining this
upper bound are often known as algebraic immunity optimal functions,
or OAI functions for short.

After OAI Boolean functions were introduced, the natural question of
constructing them was considered in a series of work (see e.g.
\cite{CDKS06,DMSBasic06,NQ06,NLQ08}). But the initial constructions
only focused on the criterion of optimal algebraic immunity and did
not satisfy other criteria of Boolean functions, so they were just
of more interest in theory. Besides, though having optimal algebraic
immunity, these functions did not resist fast algebraic attacks
(FAAs) well. The technique of fast algebraic attack is improved from
the standard algebraic attack, the key point of which is to find low
degree multiples of Boolean functions used in the ciphers to be
attacked such that their products are of reasonable degree
\cite{FAA03}. No progress in constructing Boolean functions having
all ``good" properties was made until 2008. In their pioneering
work, Carlet and Feng proposed an infinite class of balanced Boolean
functions which had optimal algebraic immunity, optimal algebraic
degree and high nonlinearity \cite{CF08}. Computer experiments
implied that the constructed functions also behaved well against
fast algebraic attacks (in fact, very recently this was validated by
Liu et al. in theory \cite{Mliu12}).

In fact, Carlet and Feng seem to have suggested a principle of
constructing Boolean functions achieving optimal algebraic immunity
from finite fields, that is consecutive powers of primitive elements
of certain cyclic groups should be involved in the functions'
supports, which can promise the utility of BCH bound from coding
theory in proving the optimal algebraic immunity of the constructed
functions. Following this principle, Tu and Deng tried a new idea
and (almost) succeeded. They constructed a class of $2m$-variable
Boolean functions based on the additive decomposition
\begin{equation}\label{addecom}
\mathbb{F}_{2^{2m}}=\mathbb{F}_{2^m}\times\mathbb{F}_{2^m}
\end{equation}
which optimized most of the criteria \cite{TD11}, but had two
drawbacks that the optimal algebraic immunity of them could only be
proved assuming the correctness of a combinatorial conjecture, and
the ability of them resisting fast algebraic attacks is bad
\cite{carlet2009}. Afterwards, Tang et al. adopted a similar
technique, constructing a class of OAI functions which also had
other good properties and good immunity against fast algebraic
attacks \cite{TDT13} (in fact, this was stated by Tang et al. based
on computer experiments firstly and proved by Liu et al. in theory
lately  \cite{mliu12biv}). Very recently, Jin et al. found a general
 construction that could involve Tu and Deng's construction and Tang
 et al.'s construction as special cases \cite{Jin11}. The optimal algebraic
 immunity of these functions was proved based on a general conjecture
 proposed in \cite{TDT13}. In all these constructions of even
 variable OAI functions, the ``certain cyclic group" was chosen to
 be $\mathbb{F}_{2^m}^*$, the multiplicative group of the finite field $\mathbb{F}_{2^m}$.

In addition to the additive decomposition \eqref{addecom} of
$\mathbb{F}_{2^{2m}}$, we also have a multiplicative decomposition
of $\mathbb{F}_{2^{2m}}^*$ like
\begin{equation}\label{multdecom}
\mathbb{F}_{2^{2m}}^*=\mathbb{F}_{2^m}^*\times U,
\end{equation}
where $U$ is a cyclic subgroup of $\mathbb{F}_{2^{2m}}^*$ of order
$(2^m+1)$. In fact, instead of multiplicative decomposition, this
decomposition is often known as the polar decomposition of
$\mathbb{F}_{2^{2m}}^*$, which can be used to construct bent and
hyper-bent functions \cite{mesnager2011}, and vectorial Boolean
functions achieving high algebraic immunity \cite{YHH12}. By
choosing the ``certain cyclic group" in Carlet and Feng's principle
to be $\mathbb{F}_{2^m}^*$, we propose a new construction of
$2m$-variable OAI Boolean functions based on the polar decomposition
\eqref{multdecom} in this paper, which can be viewed as a
multiplicative analog of Tu and Deng's construction. After modifying
these functions to be balanced ones, we obtain Boolean functions
satisfying almost all main criteria and potentially behaving well
against fast algebraic attacks (by potentially we mean that this is
only supported by computational evidence up to present). A big
difference in the ``modifying to be balanced" process between our
construction and the former ones is that, something should be
subtracted from the supports of the initially constructed functions
since they are ``fatter" than that of balanced functions in our
construction, while something should be added to the supports of the
initially constructed functions since they are ``thinner" than that
of balanced functions in the former constructions.

The rest of the paper is organized as follows. In Section 2, we give
the necessary preliminaries concerning Boolean functions. In Section
3, we prove a useful combination result, based on which we construct
a class of OAI Boolean functions in Section 4. In Section 5, these
functions are modified to be balanced ones which are also OAI
functions, and their algebraic degree, nonlinearity and  behavior
resisting fast algebraic attacks are studied. Concluding remarks are
given in Section 6.


\section{Preliminary}
\label{sec:1} Let $\mathbb{F}_{2}$  be the binary finite field and
$\mathbb{F}_{2}^{n}$ be the $n$-dimensional vector space over
$\mathbb{F}_{2}$. An $n$-variable Boolean function is a mapping from
$\mathbb{F}_{2}^{n}$ to $\mathbb{F}_{2}$. Denote by $\mathbb{B}_{n}$
the set of all $n$-variable Boolean functions. The support of a
Boolean function $f$ is defined as $$\Supp(f)=\{\vec x \in
\mathbb{F}_{2}^{n}\mid f(\vec x)=1\},$$ and the cardinality of it,
$\wt(f)$, is called the Hamming weight of $f$. Furthermore, for
another Boolean function $g\in\mathbb{B}_n$, the distance between
$f$ and $g$ is defined as $\dis(f,g)=\wt(f+g)$. When
$\wt(f)=2^{n-1}$, we call $f$ a balanced function. Abusing
notations, we also denote the Hamming weight of a vector $\vec
v\in\mathbb{F}_{2}^{n}$, i.e. the number of  nonzero positions of
$\vec v$, to be $\wt(\vec v)$. Besides, for an integer $u$, we
denote by $\wt_n(u)$ the number of 1's in the binary expansion of
the reduction of $u$ modulo $(2^n-1)$ in the complete residue system
$\{0,1,\ldots,2^n-2\}$. Obviously, $\wt_n(-u)=n-\wt_n(u)$ when
$2^{n}-1\,\nmid\,u$.

There are several ways to describe a Boolean function such as by its
truth table, algebraic normal form (ANF), univariate representation
and so on. Each $f \in \mathbb{B}_{n}$ has a unique ANF of the form
$$f(x_{1},\ldots,x_{n})=\sum_{I\subseteq \{1,2,\ldots,n\}}a_{I}
\,\prod_{i\in I}x_{i},~~a_I\in\mathbb{F}_{2}. $$ The algebraic
degree of $f$, $\deg(f)$, is defined to be $\max\{ |I| \mid
a_{I}\neq 0 \}$. It should be noted that for  $n$-variable balanced
Boolean functions, the maximal possible algebraic degree is $(n-1)$.
Boolean functions of degree at most $1$ are called affine functions,
and the set of all of them are denoted to be $\mathbb{A}_{n}$.
 In order to resist the fast correlation attacks, Boolean functions
used in  cryptographic systems should have high nonlinearity, where
the nonlinearity of a Boolean function $f$, $\mathcal{N}_f$, is
defined as the minimum distance between $f$ and all affine
functions, i.e.
$$\mathcal{N}_{f}=\min_{a\in \mathbb{A}_{n}}\dis(f,a).$$

Walsh transform  is a powerful tool in studying Boolean functions.
For any $\vec\lambda\in\mathbb{F}_{2}^{n}$, the Walsh transform of
$f\in \mathbb{B}_{n}$ at $\vec\lambda$ is defined by
 $$W_{f}(\vec\lambda)=\sum_{\vec x\in\mathbb{F}_{2}^{n}}(-1)^{f(\vec x)+\vec\lambda \cdot \vec x},$$
 where ``$\cdot$" represents  the Euclidean  inner product of vectors.
Many criteria of $f$ can be described by its Walsh transform such as
balancedness, nonlinearity and correlation immunity
\cite{carletBFbook}. For example, we have $W_f(0)=0$ when $f$ is
balanced, and we can equivalently express nonlinearity of $f$ by
$$\mathcal{N}_{f}=2^{n-1}-\max_{\vec\lambda\in \mathbb{F}_{2}^{n}}\mid
W_{f}(\vec\lambda)|.$$

As is well known that the finite field $\mathbb{F}_{2^{n}}$ is
isomorphic to $\mathbb{F}_{2}^{n}$ through the choice of a basis of
$\mathbb{F}_{2^{n}}$ over $\mathbb{F}_{2}$, hence naturally, the
Boolean function $f$ can be represented by a univariate polynomial
over $\mathbb{F}_{2^{n}}$ of the form
$$f(x)=\sum_{i=0}^{2^{n}-1}f_{i}x^{i}.$$
It can be proved that as a Boolean function, the coefficients of $f$
satisfy $f_{2i}=f_{i}^{2}$ (subscripts reduced modulo $(2^n-1)$) for
$1\leq i\leq 2^n-2$ and $f_{0},~f_{2^{n}-1}\in \mathbb{F}_{2}$.
Besides, it is not difficult to deduce that
$$\deg(f)=\max\{\wt_n(i)\mid f_{i} \neq 0,~ 0\leq i\leq 2^{n}-1\}.$$
Under univariate representation, the Walsh transform of $f$ at
$\lambda\in\mathbb{F}_{2^n}$ can be described as
$$W_{f}(\lambda)=\sum_{x\in\mathbb{F}_{2^{n}}}(-1)^{f(x)+\tr_{1}^{n}(\lambda x)},$$
where $\tr_{1}^{n}(\cdot)$ is the trace function from
 $\mathbb{F}_{2^{n}}$ to $\mathbb{F}_{2}$, i.e.
 $\tr_{1}^{n}(x)=\sum_{i=0}^{n-1}x^{2^i}$ for any $x\in\mathbb{F}_{2^{n}}$.

When $n$ is even, we can give another formulation of the univariate
representation of Boolean functions based on  polar decomposition of
$\mathbb{F}_{2^{n}}^{*}$. Let $n=2m$. Then $\mathbb{F}_{2^{m}}^{*}$
is a cyclic subgroup of $\mathbb{F}_{2^{n}}^{*}$. Since
$(2^m-1,\frac{2^n-1}{2^m-1})=(2^m-1,2^m+1)=1$, there exists a cyclic
subgroup $U$ of $\mathbb{F}_{2^{n}}^{*}$ of order $2^m+1$ such that
\[\mathbb{F}_{2^{n}}^{*}=\mathbb{F}_{2^{m}}^{*}\times U .\]
This is just the polar decomposition of $\mathbb{F}_{2^{n}}^{*}$. If
we assume $\alpha$ to be a primitive element of
$\mathbb{F}_{2^{n}}$, then it is obvious that $U=\langle\xi\rangle$
where $\xi=\alpha^{2^{m}-1}$.  From the polar decomposition we know
that any $x\in\mathbb{F}_{2^{n}}^{*}$ can be represented as $x=yz$
for some $y\in\mathbb{F}_{2^{m}}^{*}$ and $z\in U$. Then we can
represent the Boolean function $f$ by
\[f(x)=
\left\{
\begin{array}
    {c@{~\text{if}~}l}
f_{0} &  x=0; \\
f'(x)=f'(y,z)& 0\neq x=yz, ~ y\in \mathbb{F}^{*}_{2^{m}},~z\in U,
\end{array}
\right.\] where $ f'(x)=\sum_{i=0}^{2^{n}-2} f^{'}_{i}x^{i}$ is the
polynomial representation of the map
$\mathbb{F}_{2^{n}}^{*}\longrightarrow\mathbb{F}_{2}$, $c\longmapsto
f(c)$ (by Lagrange interpolation). Note that
$$
f'(y,z)=\sum_{i=0}^{2^{n}-2} f'_{i}(yz)^{i}=\sum_{j=0}^{2^{m}-2}
\sum_{k=0}^{2^{m}}f'_{j,k}y^{j}z^{k}
$$ where for any $0\leq i\leq 2^n-2$, $f'_i=f'_{jk}$ if and only
if $\left\{\begin{array}{c}i\equiv j\mod(2^m-1)\\i\equiv k\mod
(2^m+1)
\end{array}\right.$, i.e. $i\equiv 2^{m-1}((2^{m}+1)j+(2^{m}-1)k)\mod(2^n-1)$ (by the Chinese
remainder theorem). Besides,
\begin{eqnarray*}
f(x)&=&f_{0}(x^{2^{n}-1}+1)+ f'(x)x^{2^{n}-1}\\
    &=&f_{0}+f_{0}x^{2^{n}-1}+x^{2^{n}-1}\sum_{i=0}^{2^{n}-2}
    f'_{i}x^{i}\\
    &\equiv&f_{0}+(f_{0}+f'_{0})x^{2^{n}-1}+\sum_{i=1}^{2^{n}-2}
    f'_{i}x^{i} \mod (x^{2^n}+x),
\end{eqnarray*}
hence the algebraic degree of $f$ can be expressed as
\[\deg(f)=
\left\{
\begin{array}
    {c@{~\text{if}~}l}
     \max\{\wt_{n}(2^{m-1}((2^{m}+1)j+(2^{m}-1)k))\mid f'_{j,k}\neq 0\} & f_{0}+f'_{0}=0; \\
      n & f_{0}+f'_{0}\neq 0. \\
\end{array}
\right.
\]
That is to say, if the  algebraic degree of $f$ is smaller than $n$,
we have $f_{0}=f'_{0}=f'_{0,0}$ and
\begin{eqnarray*}
 \deg(f)  &=& \max\{\wt_{n}(2^{m-1}((2^{m}+1)j+(2^{m}-1)k))\mid f'_{j,k}\neq 0\} \\
   &=&\max\{\wt_{n}((2^{m}+1)j+(2^{m}-1)k)\mid f'_{j,k}\neq 0\}.
\end{eqnarray*}

To finish this section, we recall the definition of algebraic
immunity of Boolean functions.
\begin{defi}
Let $f,g \in \mathbb{B}_{n}$. $g$ is called an annihilator of $f$ if
$fg=0$. The algebraic immunity  of $f$, $\AI(f)$, is defined to be
the smallest possible degree of the nonzero annihilators of $f$ or
$f+1$, i.e.
$$\AI(f)=\min_{0\neq g \in \mathbb{B}_{n}}\{\deg(g)\mid fg=0 \text{ or
}(f+1)g=0\}.$$
\end{defi}


\section{A combination fact} \label{sec:2}

In this section, we prove a useful combination result about the
weight distribution of integers, which will be of key importance in
proving the optimal algebraic immunity of the Boolean functions
constructed in the following sections.

\begin{lem}\label{P1}
Let $n=2m$. Then for any$0 \leq j \leq 2^{m}-2 $, $1 \leq k \leq
2^{m} $, we have
$$\wt_{n}((2^{m}+1)(2^{m}-1-j)+(2^{m}-1)k)=n-\wt_{n}((2^{m}+1)j+(2^{m}-1)k).$$
\end{lem}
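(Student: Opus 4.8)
The plan is to reduce the identity to a statement about the binary expansion of the integer $N := (2^m+1)j + (2^m-1)k$ modulo $2^n-1$, where $n = 2m$. Write $a = 2^m-1-j$, so that $a$ ranges over $0 \leq a \leq 2^m-2$ as $j$ does, and set $M := (2^m+1)a + (2^m-1)k = (2^m+1)(2^m-1) - (2^m+1)j + (2^m-1)k$. Since $(2^m+1)(2^m-1) = 2^n - 1 \equiv 0 \pmod{2^n-1}$, we get $M \equiv (2^m-1)k - (2^m+1)j \pmod{2^n-1}$. The claim $\wt_n(M) = n - \wt_n(N)$ is therefore equivalent, by the identity $\wt_n(-u) = n - \wt_n(u)$ valid when $2^n-1 \nmid u$ (recalled in the preliminaries), to showing $M \equiv -N \pmod{2^n-1}$, i.e. $M + N \equiv 0 \pmod{2^n-1}$.

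**Carrying out the congruence.** Indeed, $M + N = \bigl[(2^m+1)(2^m-1-j) + (2^m-1)k\bigr] + \bigl[(2^m+1)j + (2^m-1)k\bigr] = (2^m+1)(2^m-1) + 2(2^m-1)k = (2^n-1) + 2(2^m-1)k$. So $M + N \equiv 2(2^m-1)k \pmod{2^n-1}$, which is \emph{not} identically zero, so the naive approach fails and a correction is needed. The fix: one should not collapse the two $(2^m-1)k$ terms but instead note that $N = (2^m+1)j + (2^m-1)k$ and we want its complement. Reconsider: the correct pairing replaces $j$ by $2^m-1-j$ \emph{and simultaneously} $k$ by $2^m+1-k$ would give a clean complement, but the statement only changes $j$. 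So the real content is that changing $j \mapsto 2^m-1-j$ alone already produces the bitwise complement of $N$ within $n$ bits. To see this, I would write $N \bmod (2^n-1)$ explicitly via the Chinese Remainder Theorem: $N \equiv j \pmod{2^m-1}$ and $N \equiv 2k \pmod{2^m+1}$ (since $2^m \equiv 1$ and $2^m \equiv -1$ respectively). Then $M \equiv 2^m-1-j \equiv -1-j \equiv -(j+1) \pmod{2^m-1}$ and $M \equiv 2k \pmod{2^m+1}$. Meanwhile $2^n-1-N \equiv -j \pmod{2^m-1}$ and $\equiv -2k \pmod{2^m+1}$. These don't match either, confirming the identity is genuinely about bit patterns, not residues.

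**The bit-level argument.** So the honest route is combinatorial: express $N$, reduced mod $2^n-1$ into $\{0,\dots,2^n-2\}$, in base $2$ as a length-$n$ string, and show the map $j \mapsto 2^m-1-j$ flips every bit. Here I would use the structure coming from the CRT formula $i \equiv 2^{m-1}((2^m+1)j + (2^m-1)k) \pmod{2^n-1}$ appearing in the paper, or better, directly analyze $(2^m+1)j + (2^m-1)k = 2^m(j+k) + (j-k)$. Writing $j - k = (j+k) - 2k$ and handling the two cases $j \geq k$ and $j < k$ (the latter forcing a borrow, i.e. adding $2^n-1$), I would compute the low $m$ bits and high $m$ bits of the reduced value and check that each of the $n$ bit positions of the value for $j$ is the negation of the corresponding bit for $2^m-1-j$, keeping $k$ fixed. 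The hinge is that $j$ and $2^m-1-j$ are bitwise complementary as $m$-bit strings, and the addition $2^m(j+k)+(j-k)$ arranges these $m$-bit blocks into the $n$-bit word with at most one carry/borrow whose effect is symmetric under $j \leftrightarrow 2^m-1-j$.

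**Main obstacle.** The hard part will be the bookkeeping of carries: the expression $2^m(j+k) + (j-k)$ can overflow $m$ bits in the high part and go negative in the low part, so the reduction modulo $2^n-1$ involves case analysis (whether $j+k \geq 2^m$, whether $j \geq k$, boundary cases $k = 2^m$ or $j = 0$). I expect the cleanest packaging is to prove the equivalent symmetric statement first — that replacing $(j,k)$ by $(2^m-1-j, 2^m+1-k)$ complements all $n$ bits, which follows immediately from $M+N = 2^n-1$ — and then separately show that for fixed $k$ the quantities $\wt_n((2^m+1)j+(2^m-1)k)$ and $\wt_n((2^m+1)j+(2^m-1)(2^m+1-k))$ coincide, reducing the lemma to that. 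Verifying this last symmetry in $k$ is where the carry analysis concentrates, and it is the step I would budget the most care for.
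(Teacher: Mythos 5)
Your proposal does not reach a proof: after correctly discarding the naive attempt and correctly reducing the lemma to the symmetry $\wt_n\bigl((2^m+1)j+(2^m-1)k\bigr)=\wt_n\bigl((2^m+1)j+(2^m-1)(2^m+1-k)\bigr)$ (equivalently, since $(2^m-1)(2^m+1-k)\equiv-(2^m-1)k \pmod{2^n-1}$, to $\wt_n\bigl((2^m+1)j-(2^m-1)k\bigr)=\wt_n\bigl((2^m+1)j+(2^m-1)k\bigr)$), you leave precisely that step unproved, announcing only that it will require a case analysis of carries. That is the entire content of the lemma, so as written there is a genuine gap. The carry bookkeeping you sketch would be painful and error-prone; the paper closes the gap in one line by observing that multiplication by $2^m$ modulo $2^n-1$ is a cyclic rotation of the $n$-bit representation and hence preserves $\wt_n$, while
\[
2^m\cdot\bigl[2^m(j-k)+(j+k)\bigr]\equiv 2^m(j+k)+(j-k)\pmod{2^n-1},
\]
i.e.\ the rotation carries $(2^m+1)j-(2^m-1)k$ exactly onto $(2^m+1)j+(2^m-1)k$. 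This is the one idea your plan is missing. The rest of your reduction then matches the paper's proof: $(2^m+1)(2^m-1-j)+(2^m-1)k\equiv-\bigl[(2^m+1)j-(2^m-1)k\bigr]\pmod{2^n-1}$ and $\wt_n(-u)=n-\wt_n(u)$ finish the argument (the hypothesis $1\le k\le 2^m$ is what guarantees $2^n-1\nmid(2^m+1)j-(2^m-1)k$, a point you should also note).

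Two smaller corrections. Your CRT residues are off: modulo $2^m-1$ one has $N\equiv 2j$ (not $j$), and modulo $2^m+1$ one has $N\equiv -2k$ (not $2k$); with the correct values the residues of $M$ and $-N$ actually agree modulo $2^m-1$ and disagree only modulo $2^m+1$, so your conclusion that $M\not\equiv-N$ survives, but the computation as written is wrong. Also, for the full substitution $(j,k)\mapsto(2^m-1-j,\,2^m+1-k)$ the sum $M+N$ equals $2(2^n-1)$, not $2^n-1$; the conclusion $M\equiv-N$ still holds, but the claim of a literal bitwise complement should be phrased as the weight identity $\wt_n(M)=n-\wt_n(N)$.
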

\begin{proof}
Obviously,
\begin{eqnarray*}
2^{m}[2^{m}(j-k)+(j+k)]&\equiv& 2^{n}(j-k)+2^{m}(j+k)   \\
                       &\equiv&2^{m}(j+k)+(j-k)  \mod  (2^{n}-1),
\end{eqnarray*}
and thus
\begin{eqnarray*}
 \wt_n(2^{m}(j-k)+(j+k))  &=& \wt_n(2^{m}[2^{m}(j-k)+(j+k)]) \\
   &=& \wt_n(2^{m}(j+k)+(j-k)) \\
   &=&\wt_n((2^{m}+1)j+(2^{m}-1)k).
\end{eqnarray*}
Then we get
\begin{eqnarray*}
          wt_{n}((2^{m}+1)(2^{m}-1-j)+(2^{m}-1)k)
           &=&wt_{n}(2^n-1-(2^{m}+1)j+(2^{m}-1)k)\\
           &=&n-wt_{n}((2^{m}+1)j-(2^{m}-1)k)\\
           &=&n-wt_{n}(2^{m}(j-k)+(j+k))\\
           &=&n-\wt_n((2^{m}+1)j+(2^{m}-1)k).
\end{eqnarray*}
\qedd
\end{proof}

\begin{prop}\label{P3}
Let $n=2m$. For any $0\leq k\leq 2^m$, define
$$S_{k}=\{j\in\mathbb{Z}/(2^{m}-1)\mathbb{Z}\mid\wt_{n}((2^{m}+1)j+(2^{m}-1)k)<m\}.$$
Then $|S_{k}| \leq 2^{m-1}$, $0\leq k \leq 2^{m}$. Moreover, ``$=$"
holds if and only if $m$ is odd and $k=0$.
\end{prop}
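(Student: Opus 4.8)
The plan is to split according to whether $k=0$ or $1\le k\le 2^m$: the latter range will be handled by an involution argument built on Lemma~\ref{P1}, and the former by an explicit binomial count, which is the only place where the parity of $m$ enters.

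Consider first $1\le k\le 2^m$. Let $\phi$ be the involution $j\mapsto 2^m-1-j$ on $\mathbb{Z}/(2^m-1)\mathbb{Z}$; its unique fixed point is $j=0$, and since $2^m-1-j\equiv -j\pmod{2^m-1}$, Lemma~\ref{P1} gives $\wt_n\bigl((2^m+1)\phi(j)+(2^m-1)k\bigr)=n-\wt_n\bigl((2^m+1)j+(2^m-1)k\bigr)$ for every $j$. Taking $j=0$ here shows $\wt_n((2^m-1)k)=m$, so $0\notin S_k$; note that the hypothesis $1\le k\le 2^m$ is exactly what guarantees $(2^m-1)k\not\equiv 0\pmod{2^n-1}$, so that Lemma~\ref{P1} is non-degenerate at $j=0$. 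For $j\ne 0$ the two weights attached to $j$ and $\phi(j)$ sum to $n=2m$, hence cannot both be $<m$; so $S_k$ and $\phi(S_k)$ are disjoint subsets of the $2^m-2$ nonzero residues. Since $\phi$ is a bijection, $2|S_k|\le 2^m-2$, i.e. $|S_k|\le 2^{m-1}-1<2^{m-1}$.

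Now take $k=0$. For $0\le j\le 2^m-2$ we have $(2^m+1)j=2^m j+j<2^n-1$, so no reduction modulo $2^n-1$ occurs, and the $n$-bit expansion of $(2^m+1)j$ is simply the $m$-bit expansion of $j$ written twice; thus $\wt_n((2^m+1)j)=2\,\wt_m(j)$, with $\wt_m(j)\le m-1$. Therefore $j\in S_0\iff \wt_m(j)<m/2$, and since the excluded residue $j=2^m-1$ has weight $m$ anyway, $|S_0|=\sum_{w<m/2}\binom{m}{w}$. If $m$ is even, binomial symmetry $\binom{m}{w}=\binom{m}{m-w}$ gives $|S_0|=\tfrac12\bigl(2^m-\binom{m}{m/2}\bigr)=2^{m-1}-\tfrac12\binom{m}{m/2}<2^{m-1}$; if $m$ is odd, the same symmetry splits $\{0,1,\dots,m\}$ into the equal-sized halves $\{w\le (m-1)/2\}$ and $\{w\ge (m+1)/2\}$, whence $|S_0|=2^{m-1}$. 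Combining the two cases yields $|S_k|\le 2^{m-1}$ for all $0\le k\le 2^m$, with equality if and only if $m$ is odd and $k=0$. The only delicate points are verifying that Lemma~\ref{P1} applies at $j=0$ (which forces $0\notin S_k$ and hence strictness for $k\ge 1$) and keeping the boundary term $w=m/2$ straight in the binomial sum for even versus odd $m$; everything else is routine bookkeeping.
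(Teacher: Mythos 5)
Your proof is correct and follows essentially the same route as the paper: the $k=0$ case via the identity $\wt_n((2^m+1)j)=2\wt_m(j)$ and the binomial count, and the case $1\le k\le 2^m$ via the pairing $j\mapsto 2^m-1-j$ from Lemma~\ref{P1} together with the observation that $\wt_n((2^m-1)k)=m$ excludes $j=0$ (your $\phi(S_k)$ is exactly the paper's set $T_k$).
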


\begin{proof}
Consider  the case $k=0$ firstly. Since
\begin{eqnarray*}
S_{0}&=&\{j\in\mathbb{Z}/(2^{m}-1)\mathbb{Z}\mid\wt_{n}((2^{m}+1)j)<m\}\\
     &=&\{j\in\mathbb{Z}/(2^{m}-1)\mathbb{Z}\mid\wt_{n}(2^{m}j+j)<m\}\\
     &=&\{j\in\mathbb{Z}/(2^{m}-1)\mathbb{Z}\mid\wt_{m}(j)<\frac{m}{2}\},
\end{eqnarray*}
it is easy to get
\begin{eqnarray*}
|S_{0}|&=&\left\{
          \begin{array}{c@{~\text{if}~}l}
            \sum^{\lfloor\frac{m}{2}\rfloor}_{i=0}\binom{m}{i}&  m\text{ is odd};\\
            \sum^{\frac{m}{2}-1}_{i=0}(^{m}_{i})&  m\text{ is even}
          \end{array}
        \right.\\
        &=&\left\{
          \begin{array}{c@{~\text{if}~}l}
            2^{m-1}&  m\text{ is odd};\\
            2^{m-1}-\frac{1}{2}\binom{m}{m/2}&  m\text{ is even},
          \end{array}
        \right.
\end{eqnarray*}
which implies that $|S_{0}|<2^{m-1}$ when $m$ is even.

Now we consider the case $1 \leq k \leq 2^{m}$. Define the set
$$T_{k}=\{j\in\mathbb{Z}/(2^{m}-1)\mathbb{Z}\mid\wt_{n}((2^{m}+1)j+(2^{m}-1)k)>m\}.$$
From Lemma \ref{P1} we have for any $0\leq j\leq 2^m-2$,
$$\wt_{n}((2^{m}+1)(2^{m}-1-j)+(2^{m}-1)k)=n-\wt_{n}((2^{m}+1)j+(2^{m}-1)k),$$
thus $|S_{k}|=|T_{k}|$. On the other hand, since
\begin{eqnarray*}
    \wt_{n}((2^{m}-1)k)&=&\wt_{n}(2^{m}(2^{m}k-k))\\
    &=&\wt_{n}(k-2^{m}k)\\
    &=&n-\wt_{n}((2^{m}-1)k),
\end{eqnarray*}
i.e. $wt_{n}((2^{m}-1)k)=\frac{n}{2}=m$, we know that $0 \not\in
S_{k}$ and $0\not\in T_{k}$, which implies
$$|S_{k}|+|T_{k}|\leq
2^{m}-2 $$ as $S_k\cap T_k=\emptyset$. Then it follows that
$|S_k|\leq 2^{m-1}-1<2^{m-1}$. \qedd
\end{proof}


\section{A class of unbalanced OAI Boolean functions}\label{sec:3}

In this section, based on polar decomposition of
$\mathbb{F}_{2^n}^*$ and the combination results in Section
\ref{sec:2}, we construct a new class of OAI Boolean functions.

\begin{constr}\label{con1}
Let $n=2m$. Let $\beta$ be a primitive element of $\mathbb{F}_{2^m}$
and $U$ be the cyclic group defined in Section \ref{sec:1}. Set
$\Delta=\{1,\beta,\beta^{2},\ldots,\beta^{2^{m-1}-1}\}$. Define an
$n$-variable Boolean function $f$ by setting
$$ \Supp(f)= \Delta\times U.$$
\end{constr}

\begin{thm}\label{T1}
Let $f$ be the  Boolean function defined in Construction \ref{con1}.
Then $f$ has optimal algebraic immunity.
\end{thm}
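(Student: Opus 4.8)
The plan is to follow the Carlet–Feng strategy adapted to the polar decomposition. Suppose $g$ is a nonzero annihilator of $f$ of degree $d<m$; I want to derive a contradiction (and symmetrically for $f+1$, which by Lemma~\ref{P1} and Proposition~\ref{P3} will be the mirror case). Write $g$ in the univariate-over-$\mathbb{F}_{2^n}$ form, and then re-express it through the polar coordinates $x=yz$ with $y\in\mathbb{F}_{2^m}^*$, $z\in U$, so that $g'(y,z)=\sum_{j,k}g'_{j,k}y^jz^k$ with $j$ ranging over $\mathbb{Z}/(2^m-1)\mathbb{Z}$ and $k$ over $\{0,1,\ldots,2^m\}$. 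Since $\deg(g)<m$, only pairs $(j,k)$ with $\wt_n((2^m+1)j+(2^m-1)k)<m$ can occur; fixing $z\in U$, the function $y\mapsto g'(y,z)$ is a polynomial in $y$ whose exponents $j$ all lie in the set $S_k$ of Proposition~\ref{P3} (together possibly with the constant term, but that is the $j=0$ case which is handled uniformly).

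Now exploit the support: $fg=0$ means $g$ vanishes on $\Supp(f)=\Delta\times U$, i.e. for every $z\in U$ the one-variable polynomial $y\mapsto g'(y,z)$ vanishes at all $y\in\Delta=\{1,\beta,\ldots,\beta^{2^{m-1}-1}\}$, which is a set of $2^{m-1}$ consecutive powers of the primitive element $\beta$ of $\mathbb{F}_{2^m}$. By Proposition~\ref{P3}, the number of admissible exponents $j$ is $|S_k|\le 2^{m-1}$, and the inequality is strict unless $m$ is odd and $k=0$. So for each $z\in U$ the polynomial $y\mapsto g'(y,z)$ has at most $2^{m-1}$ monomials (in the generic case strictly fewer) but $2^{m-1}$ prescribed zeros among consecutive powers of $\beta$; the BCH-type argument — viewing the coefficient vector as lying in a cyclic code whose defining set contains $2^{m-1}$ consecutive elements, hence of minimum distance $>2^{m-1}$ — forces $g'(\cdot,z)\equiv 0$ when $|S_k|<2^{m-1}$, and when $|S_k|=2^{m-1}$ it forces the support of the exponent set to be exactly $S_0$ and the vanishing to be "tight," which still lets me conclude $g'(\cdot,z)=0$ unless it is a scalar multiple of the unique nonzero codeword, a case I rule out by a parity/weight count on $S_0$. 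Running this for every $z\in U$ gives $g'\equiv 0$ on $\mathbb{F}_{2^n}^*$, and then checking $g(0)$ separately yields $g=0$, contradiction.

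The main obstacle is the boundary case $m$ odd, $k=0$, where $|S_0|=2^{m-1}$ exactly, so the naive monomial-count-versus-zero-count argument is not strictly decisive; here I will need to argue more carefully that the exponent set $S_0=\{j:\wt_m(j)<m/2\}$ cannot be a set of $2^{m-1}$ consecutive residues mod $2^m-1$ (indeed it is not, as it is a "low-weight" set, not an interval), so a nonzero polynomial supported on $S_0$ cannot vanish on $2^{m-1}$ consecutive powers of $\beta$ — this is again a BCH/Vandermonde non-degeneracy statement but applied to the specific set $S_0$ rather than to an arbitrary set of the same size. The symmetric statement for annihilators of $f+1$ follows because $\Supp(f+1)\supseteq \mathbb{F}_{2^n}^*\setminus(\Delta\times U)$, and the complement of $\Delta$ in $\mathbb{F}_{2^m}^*$ is again a block of $2^{m-1}-1$ consecutive powers of $\beta$, together with the point $0$; combined with the weight-complementation identity of Lemma~\ref{P1} (which swaps $\wt_n(\cdot)<m$ with $>m$ and thus pairs up $S_k$ with $T_k$), the same BCH bound applies. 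Hence $\AI(f)\ge m=\lceil n/2\rceil$, and since this is the maximum possible, $\AI(f)=m$, i.e. $f$ is OAI.
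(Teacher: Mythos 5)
Your overall strategy (polar decomposition, restriction to $\Supp(f)$, BCH bound against Proposition~\ref{P3}) is the right one, but the central step as you have written it does not go through. You fix $z\in U$ and treat $y\mapsto g'(y,z)$ as a sparse polynomial whose exponent set lies in ``the set $S_k$.'' For fixed $z$ the coefficient of $y^j$ is $\sum_{k=0}^{2^m}g_{j,k}z^k$, so the exponents $j$ that can occur are only constrained to lie in $\bigcup_{k=0}^{2^m}S_k$, which can have size up to $2^m-1$; the bound $|S_k|\le 2^{m-1}$ applies to each $k$-slice separately and gives you nothing about this union. The missing idea is the decoupling in the other order: fix $y_0\in\Delta$ and let $z$ range over \emph{all} of $U$. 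Then $g'(y_0,z)=\sum_{k=0}^{2^m}g_k(y_0)z^k$ is a polynomial of degree at most $2^m$ in $z$ with $2^m+1$ roots, hence $g_k(y_0)=0$ for every $k$ individually. Only after this step do you have, for each fixed $k$, a codeword $(g_{0,k},\ldots,g_{2^m-2,k})$ of a BCH code with the $2^{m-1}$ consecutive zeros $1,\beta,\ldots,\beta^{2^{m-1}-1}$, hence of weight at least $2^{m-1}+1$ if nonzero, while its support lies in $S_k$ of size at most $2^{m-1}$. This is an outright contradiction for every $k$, so for the annihilators of $f$ there is \emph{no} boundary case at all: $|\Delta|=2^{m-1}$ gives BCH minimum weight $2^{m-1}+1$, which already beats $|S_0|=2^{m-1}$.

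The genuine tie occurs where you did not place it, namely for annihilators of $f+1$: there $\Supp(f+1)=\Delta'\times U\cup\{0\}$ with $|\Delta'|=2^{m-1}-1$, so the BCH bound only gives weight at least $2^{m-1}$, which matches $|S_0|=2^{m-1}$ when $m$ is odd. Your proposed escape --- that $S_0$ is not an interval of consecutive residues, so a polynomial supported on it cannot vanish on $2^{m-1}$ consecutive powers of $\beta$ --- is not a theorem; the BCH bound with $2^{m-1}-1$ consecutive zeros permits weight-$2^{m-1}$ codewords, and nothing about $S_0$ failing to be an interval rules one out. The correct resolution uses the extra point $0\in\Supp(f+1)$: if the $k=0$ codeword is nonzero, its weight is squeezed to exactly $2^{m-1}$ and its support must be all of $S_0$, which contains $j=0$ (as $\wt_m(0)=0<m/2$); since $\deg(g)<m<n$ forces $g(0)=g_0=g'_{0,0}\in\mathbb{F}_2$, this gives $g(0)=1$, contradicting that $g$ must vanish at $0\in\Supp(f+1)$. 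With the decoupling step inserted and the boundary case relocated and repaired in this way, the proof closes.
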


\begin{proof}
From the definition of algebraic immunity, it suffices to prove that
there is no nonzero annihilator with degree smaller than $m$ of both
$f$ and $f+1$.

Suppose $g \neq 0$ is an annihilator of $f$ with algebraic degree
smaller than $m$. Assume
\[g(x)=
\left\{
\begin{array}
    {c@{~\text{if}~}l}
    g'(y,z) & 0\neq x=yz, ~ y\in \mathbb{F}^{*}_{2^{m}},~z\in U; \\
   g_{0}  & x=0,
\end{array}
\right.
\]
where
$\displaystyle{g'(y,z)=\sum_{j=0}^{2^{m}-2}\sum_{k=0}^{2^{m}}g_{j,k}y^{j}z^{k}}$,
$g_{j,k}\in\mathbb{F}_{2^{n}}$, $ g_{0} \in \mathbb{F}_{2}$. Then
$$g'(y,z)=\sum_{k=0}^{2^{m}}\left(\sum_{j=0}^{2^{m}-2}g_{j,k}y^{j}\right)z^{k}=\sum_{k=0}^{2^{m}}g_{k}(y)z^{k}=0$$
for all $z\in U$ and $y\in\Delta$, where
$\displaystyle{g_{k}(y)=\sum_{j=0}^{2^{m}-2}g_{j,k}y^{j}}$. For any
fixed $y_{0}\in \Delta$, since $g'(y_{0},z)$ has (\,$2^{m}+1$\,)
zeros, we conclude that $g_{k}(y)=0$ for any $y \in \Delta$, $0 \leq
k \leq 2^{m}$.

On the one hand, from the definition of BCH code \cite{MacError77},
we know that for $0\leq k\leq 2^m$,
$(g_{0,k},g_{1,k},g_{2,k},\ldots,g_{2^{m}-2,k})$ is a codeword of
some BCH code over $\mathbb{F}_{2^{n}}$ of length $(2^{m}-1)$  with
elements in $\Delta$ as zeros. Thus based on the BCH bound, the
Hamming weight of a nonzero codeword should be greater than or equal
to $(2^{m-1}+1)$, i.e.
$$\wt(g_{0,k},g_{1,k},g_{2,k},\ldots,g_{2^{m}-2,k})\geq 2^{m-1}+1.$$
On the other hand, since $\deg(g)<m$, we have $g_{j,k}=0$ if
$\wt_{n}((2^{m}+1)j+(2^{m}-1)k)\geq m $. Form Proposition \ref{P3},
we know  that $|S_{k}|\leq2^{m-1}$. That is
   \begin{eqnarray*}
       \wt(g_{0,k},g_{1,k},g_{2,k},\cdots,g_{2^{m}-2,k})\leq
        2^{m-1},
    \end{eqnarray*}
which leads to a contradiction. Hence $g=0$.

Next, we consider the function $f+1$. Note that
$$  \Supp(f+1) =\Delta'\times U \cup \{0\} $$
 where
 $\Delta'=\{\beta^{2^{m-1}},\beta^{2^{m-1}+1},\ldots,\beta^{2^{m}-2}\}$.
Similar to the proof with respect to $f$, we let $g$ be now a
nonzero annihilator of $f+1$ with algebraic degree smaller than $m$.
We can deduce from the BCH bound  that, for any $0\leq k\leq 2^m$,
the vector $(g_{0,k},g_{1,k},g_{2,k},\cdots,g_{2^{m}-2,k})$ has
weight at least $2^{m-1}$ since $|\Delta'|=2^{m-1}-1$. By
Proposition \ref{P3}, we know that when $m$ is even, the weight of
$(\,g_{0,k},g_{1,k},\cdots,g_{2^{m}-2,k}\,)$ is smaller than
$2^{m-1}$, thus a contradiction follows and $g=0$. When  $m$ is odd,
we have
$(g_{0,k},g_{1,k},g_{2,k},\cdots,g_{2^{m}-2,k})=(0,0,\ldots,0)$ for
$1\leq k\leq 2^m$. Since $|S_0|=2^{m-1}$, we get
$\wt((g_{0,0},g_{1,0},\cdots,g_{2^{m}-2,0}))=2^{m-1}$, which implies
that $g_{0,0}=g_0=1$.
 However, this contradicts the fact that
$0\in\Supp(f+1)$. We also have $g=0$.

To summarize, we know that $f$ has optimal algebraic immunity. \qedd
\end{proof}

\begin{rmk}
From the proof of Theorem \ref{T1}, it is easy to see that if we
replace the set $\Delta$ in Construction \ref{con1} by
$\{\beta^s,\beta^{s+1},\ldots,\beta^{s+2^{m-1}-1}\}$ for any $0\leq
s\leq 2^m-2$, we can also obtain Boolean functions with optimal
algebraic immunity.
\end{rmk}

It is direct to find that the weight of the function in Construction
\ref{con1} is $(2^{n-1}+2^{m-1})$, which is bigger than that of
balanced functions. Thus we do not talk about their further
properties since they are not of applicable interest.


\section{Balanced functions with optimal algebraic immunity and other good
properties}\label{sec:4}

In this section, we modify the functions in Construction \ref{con1}
to be balanced ones which maintain optimal algebraic immunity by
changing some points between their supports and zeros. Furthermore,
we study in detail properties of these balanced functions such as
their algebraic degree, nonlinearity and immunity against fast
algebraic attacks.

\begin{constr}\label{con2}
Let $n=2m$. Let $\alpha$ be a primitive element of
$\mathbb{F}_{2^n}$ and $\beta=\alpha^{2^m+1}$, $\xi=\alpha^{2^m-1}$.
Set $\Gamma=\{\beta,\beta^{2},\ldots,\beta^{2^{m-1}-1}\}$. Define an
$n$-variable Boolean function $F$ by setting
$$ \Supp(F)= (\Gamma\times U) \cup (\{1\}\times\{1,\xi,\ldots,\xi^{2^{m-1}}\}).$$
\end{constr}

\begin{thm}\label{T2}
Let $F$ be the  Boolean function defined in Construction \ref{con2}.
Then $F$ is balanced and has optimal algebraic immunity.
\end{thm}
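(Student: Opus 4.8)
The plan is to verify the two claims separately, with balancedness being straightforward and optimal algebraic immunity following the template of Theorem \ref{T1} with careful bookkeeping of the ``modification'' set. For balancedness, I would simply count: $\Supp(F)$ consists of $\Gamma\times U$, which has cardinality $(2^{m-1}-1)(2^m+1)$, together with $\{1\}\times\{1,\xi,\ldots,\xi^{2^{m-1}}\}$, which has cardinality $2^{m-1}+1$; these two sets are disjoint since $1\notin\Gamma$. Adding these gives $(2^{m-1}-1)(2^m+1)+2^{m-1}+1 = 2^{2m-1}-2^m+2^{m-1}-1+2^{m-1}+1 = 2^{2m-1} = 2^{n-1}$, so $F$ is balanced. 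The point of the construction is that passing from $\Delta=\{1\}\cup\Gamma$ times $U$ (weight $2^{n-1}+2^{m-1}$, the function of Construction \ref{con1}) to $F$ removes the ``slice'' $\{1\}\times U$ of size $2^m+1$ and puts back only $2^{m-1}+1$ of its points, subtracting exactly $2^{m-1}$ — which is why the bookkeeping in the AI proof will be delicate precisely on the $y=1$ slice.

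For optimal algebraic immunity I would follow the proof of Theorem \ref{T1} verbatim up to the point where BCH-type arguments are applied, with one change: because the $y=1$ slice is now only partially in the support, the annihilator condition no longer holds uniformly over a full set of $2^{m-1}$ elements $y\in\Delta$ for all $z\in U$. So I would split $\Supp(F)$ and $\Supp(F+1)$ along slices. Writing an annihilator $g$ of $F$ in polar form as $g'(y,z)=\sum_{k=0}^{2^m}g_k(y)z^k$ with $g_k(y)=\sum_{j=0}^{2^m-2}g_{j,k}y^j$, the condition $Fg=0$ forces $g'(y,z)=0$ for all $y\in\Gamma$ and all $z\in U$, hence $g_k(y)=0$ for all $y\in\Gamma$ and $0\le k\le 2^m$ (as before, using that $g'(y_0,\cdot)$ has $2^m+1$ zeros but degree at most $2^m$ in $z$); additionally $g'(1,\xi^\ell)=0$ for $0\le\ell\le 2^{m-1}$. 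The first family of conditions says each vector $(g_{0,k},\ldots,g_{2^m-2,k})$ lies in the BCH code over $\mathbb F_{2^n}$ of length $2^m-1$ with the $2^{m-1}-1$ elements of $\Gamma$ as prescribed zeros, so a nonzero such codeword has weight $\ge 2^{m-1}$. Meanwhile $\deg(g)<m$ forces $g_{j,k}=0$ whenever $\wt_n((2^m+1)j+(2^m-1)k)\ge m$, so by Proposition \ref{P3} the weight of $(g_{0,k},\ldots,g_{2^m-2,k})$ is at most $|S_k|\le 2^{m-1}$, with equality only if $m$ odd and $k=0$. For $k\ne 0$ (and for all $k$ when $m$ is even) we get a strict inequality against $\ge 2^{m-1}$, forcing $g_k\equiv 0$; the remaining work is to kill $g_0$ in the case $m$ odd using the extra conditions $g'(1,\xi^\ell)=0$, which I expect to reduce — after substituting $y=1$ — to a polynomial identity in $z$ forcing $g_{0,0}=0$, hence $g_0(1)=0$, hence $g_0\equiv 0$ because its support has size exactly $2^{m-1}=|S_0|$ yet it vanishes on the $2^{m-1}$-element set $\Gamma\cup\{1\}$ which I would need to argue imposes enough independent constraints (this is the analog of the step in Theorem \ref{T1} where $g_{0,0}=1$ was derived and then contradicted). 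The treatment of $F+1$ is symmetric: $\Supp(F+1)$ restricted to the $y=1$ slice is $\{1\}\times\{\xi^{2^{m-1}+1},\ldots,\xi^{2^m}\}$, a set of size $2^{m-1}$, together with $\{0\}$ and $(\mathbb F_{2^m}^*\setminus(\Gamma\cup\{1\}))\times U$; the codeword-length argument now uses the $\approx 2^{m-1}$ prescribed zeros to again force weight $\ge 2^{m-1}$, and Proposition \ref{P3} gives the contradiction, with the value $0$ of $F+1$ at the origin providing the final contradiction in the $m$-odd $k=0$ case just as in Theorem \ref{T1}.

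The main obstacle I anticipate is the $y=1$ slice in the case $m$ odd: there Proposition \ref{P3} does not give room to spare for $k=0$, so one cannot conclude $g_0\equiv0$ from the BCH/weight count on $\Gamma$ alone (which only provides $2^{m-1}-1$ zeros, giving weight bound $2^{m-1}$, matching $|S_0|$), and one genuinely needs the extra interpolation conditions coming from the $2^{m-1}+1$ points $\{1\}\times\{1,\xi,\ldots,\xi^{2^{m-1}}\}$ that were added to the support. Concretely, I would set $y=1$ in $g'(y,z)=\sum_k g_k(1)z^k$; the conditions for $y\in\Gamma$ already give $g_k(1)$ in terms of $g_0(1)$ for $k\ne0$ being... no — rather, for $k\ne 0$ we have shown $g_k\equiv 0$ identically, so $g'(1,z)=g_0(1)$, a constant; then $g'(1,\xi^\ell)=0$ for even one value of $\ell$ forces $g_0(1)=0$. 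Combined with $g_0$ vanishing on all of $\Gamma$ — that is $2^{m-1}$ zeros total, $\Gamma\cup\{1\}$ — and $g_0$ being a polynomial supported on a set of size $|S_0|=2^{m-1}$, we would conclude $g_0\equiv0$ provided we check that ``$2^{m-1}$ zeros forces a degree-$<2^{m-1}$-support polynomial to vanish,'' which follows from the BCH bound applied to the code with zero set $\Gamma\cup\{1\}$ of size $2^{m-1}$ (minimum distance $\ge 2^{m-1}+1 > 2^{m-1}\ge\wt$). This closes the argument; the rest is the routine bookkeeping already carried out in Theorem \ref{T1}.
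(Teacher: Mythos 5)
Your proposal is correct and follows essentially the same route as the paper: the same weight count for balancedness, and for algebraic immunity the same two-stage argument — first killing the components $g_k$ for $k\neq 0$ via the BCH bound on the $2^{m-1}-1$ consecutive zeros in $\Gamma$ together with Proposition \ref{P3}, then using the added points on the $y=1$ slice to adjoin $1$ to the zero set, so that the BCH bound on the $2^{m-1}$ consecutive powers $\{1\}\cup\Gamma$ contradicts $|S_0|=2^{m-1}$ in the remaining $m$-odd, $k=0$ case (and symmetrically for $F+1$). The paper's own proof is just a terser version of this, so no further comment is needed.
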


\begin{proof}
It is obvious that
$\wt(F)=(2^{m-1}-1)\times(2^{m}+1)+2^{m-1}+1=2^{n-1}$, so $F$ is
balanced.

The proof of optimal algebraic immunity of $F$ is similar to that of
Theorem \ref{T1}. Suppose $g$ is a nonzero annihilator of $F$ with
algebraic degree smaller than $m$, and assume
\[g(x)= \left\{
\begin{array}
    {c@{~\text{if}~}l}
    g'(y,z) & 0\neq x=yz,~ y\in \mathbb{F}^{*}_{2^{m}},~z\in U; \\
   g_{0}  & x=0,
\end{array}
\right.
\]
where
$\displaystyle{g'(y,z)=\sum_{j=0}^{2^{m}-2}\sum_{k=0}^{2^{m}}g_{j,k}y^{j}z^{k}}$,
$g_{j,k}\in\mathbb{F}_{2^{n}}$, $ g_{0} \in \mathbb{F}_{2}$. Since
$\{\beta,\beta^{2},\ldots,\beta^{2^{m-1}-1}\} \times U \subseteq
\Supp(f)$, by the BCH bound and Proposition \ref{P3}, we get for
$k>0$, $(g_{0,k},g_{1,k},\ldots,g_{2^{m}-2,k})=(0,0,\ldots,0)$. Then
$g'(y,z)$ turns to $g'(y,z)=\sum_{j=0}^{2^{m}-2}g_{j,0}y^{j}$.
Besides, as $\{1\}\times\{1,\xi,\ldots,\xi^{2^{m-1}}\} \subseteq
\Supp(f)$, we have
$$g'(1,z)=\sum_{j=0}^{2^{m}-2}g_{j,0}1^{j}=0,$$ which means that
$\{1,\beta,\beta^{2},\ldots,\beta^{2^{m-1}-1}\}$ are  zeros of
certain BCH code containing $(g_{0,0},g_{1,0},\ldots,g_{2^{m}-2,0})$
as a codeword. Using the BCH bound and Proposition \ref{P3} again,
we obtain a contradiction. Thus $F$ has no nonzero annihilator with
degree smaller than $m$. With respect to $F+1$, the proof procedure
is almost the same.

Finally,  we conclude that the Boolean function $F$ has optimal
algebraic immunity.\qedd
\end{proof}

\begin{rmk}
From the proof of Theorem \ref{T2}, it is not difficult to see that
we can also set $ \Supp(F)=
(\{1,\beta,\ldots,\beta^{2^{m-1}-2}\}\times U) \cup
(\{\beta^{2^{m-1}-1}\}\times\{1,\xi,\ldots,\xi^{2^{m-1}}\})$ to
obtain balanced Boolean functions with optimal algebraic immunity.
\end{rmk}

\subsection{Polynomial representation and algebraic degree}

In the following, we compute  the univariate representation of the
OAI Boolean function $F$ in Construction \ref{con2} and deduce its
algebraic degree.

By the Chinese remainder theorem, we can write the support of $F$ in
the  form
\begin{eqnarray*}
 \Supp(F)  &=& \{\alpha^{2^{m-1}((2^m+1)j+(2^m-1)k)}\mid 1 \leq j \leq 2^{m-1}-1,~0\leq k \leq 2^{m}\} \\
   &&\cup\{ \alpha^{2^{m-1}(2^m-1)k}\mid 0 \leq k \leq 2^{m-1}\}.
\end{eqnarray*}

For simplicity, we distinguish the integer
$2^{m-1}((2^m+1)j+(2^m-1)k)$ reduced modulo $(2^n-1)$ with a pair
$(j,k)$ where $0\leq j\leq 2^m-2$, $0\leq k\leq 2^m$. It is easy to
find that
\begin{eqnarray*}
 (j+1,k+1) &=& 2^{m-1}((2^{m}+1)(j+1)+(2^{m}-1)(k+1)) \\
           &=& 2^{m-1}((2^{m}+1)j+(2^{m}-1)k)+1\\
           &=& (j,k)+1,
\end{eqnarray*}
and
\begin{eqnarray*}
  (j,k-2)&=& 2^{m-1}((2^{m}+1)j+(2^{m}-1)(k-2)) \\
         &=& 2^{m-1}((2^{m}+1)j+(2^{m}-1)k)+(2^{m}-1)\\
         &=& (j,k)+(2^{m}-1).
\end{eqnarray*}
Using these properties, we can derive that the support of $F$ is
just
\begin{eqnarray*}
  \Supp(F) &=&\{\alpha^{(j,k)}\mid 1 \leq j \leq 2^{m-1}-1,~0\leq k \leq 2^{m}\}
\cup \{ \alpha^{(0,k)}\mid 0 \leq k \leq 2^{m-1}\}  \\
   &=&\{\alpha^{l(2^{m}-1)+r}\mid 0\leq l \leq 2^{m},~1 \leq r
   \leq2^{m-1}-1\}\\
   &&\cup\{\alpha^{2^{m-1}(2^{m}-1)k} \mid 0\leq k \leq2^{m-1}\}.
\end{eqnarray*}
Then the coefficients of the function $f'$ whose support is the
first part of $\Supp(F)$ can be decribed explicitly, i.e. for
$0<i<2^{n}-1$,
\begin{eqnarray*}
 f'_{i}&=&\sum_{l=0}^{2^{m}}\sum_{j=l(2^{m}-1)+1}^{l(2^{m}-1)+2^{m-1}-1}(\alpha^{-i})^{j}\\
      &=&\sum_{l=0}^{2^{m}}\frac{(\alpha^{-i})^{1+l(2^{m}-1)}(1-(\alpha^{-i})^{2^{m-1}-1})}{1-\alpha^{-i}}\\
      &=&\dfrac{\alpha^{-i}(1-\alpha^{-i(2^{m-1}-1)})}{1-\alpha^{-i}}\sum_{l=0}^{2^{m}}\alpha^{-il(2^{m}-1)}\\
      &=&\left\{
          \begin{array} {c@{~\text{if}~}l}
          0 & 2^{m}+1\nmid i  ; \\
            \dfrac{\alpha^{-i}(1-\alpha^{-i(2^{m-1}-1)})}{1-\alpha^{-i}} & 2^{m}+1\mid i.
          \end{array}
        \right.
\end{eqnarray*}
Similarly,  the coefficients of $f''$ whose support is the second
part of $\Supp(F)$ are that, for $0<i<2^{n}-1$,
\begin{eqnarray*}
 f''_{i}&=&\sum_{k=0}^{2^{m-1}}(\alpha^{-i})^{2^{m-1}(2^{m}-1)k}\\
        &=&\left\{
          \begin{array} {c@{~\text{if}~}l}
            \dfrac{1-\alpha^{-i2^{m-1}(2^{m}-1)(2^{m-1}+1)}}{1-\alpha^{-i2^{m-1}(2^{m}-1)}} & 2^{m}+1\nmid i; \\
            1 & 2^{m}+1 \mid i.
          \end{array}
        \right.
\end{eqnarray*}
It is obvious that, if we assume $F(x)=\sum_{i=0}^{2^{n}-1}
F_{i}x^{i}$, then $F_i=f'_{i}+ f''_{i}$ for $1\leq i\leq 2^{n}-2$,
$F_0=0$ (since $F(0)=0$) and $F_{2^n-1}=0$ (since $F$ is balanced).
Hence we can give the univariate representation of $F$.
\begin{thm}\label{T3}
Let $F$ be the $n$-variable Boolean function defined in Construction
\ref{con2}. Then the univariate representation of $F$ is
 $$F(x)=\sum_{i=1}^{2^{n}-2} F_{i}x^{i},$$
where
\[
F_{i} =\left\{
          \begin{array}{c@{~\text{if}~}l}
            \dfrac{1-\alpha^{-i2^{m-1}(2^{m}-1)(2^{m-1}+1)}}{1-\alpha^{-i2^{m-1}(2^{m}-1)}}& 2^{m}+1\nmid i; \\
            1+\dfrac{\alpha^{-i}(1-\alpha^{-i(2^{m-1}-1)})}{1-\alpha^{-i}}& 2^{m}+1 \mid i.
          \end{array}
        \right.
\]
Hence the algebraic degree of  $F$ is $(n-1)$, which is optimal for
balanced Boolean functions.
\end{thm}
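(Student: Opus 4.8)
The plan is to obtain the univariate formula by simply combining the coefficient computations carried out just above the statement, and then to pin down the algebraic degree by exhibiting a single nonzero coefficient of maximal binary weight. For the formula, recall that for $1\le i\le 2^n-2$ we already have $F_i=f'_i+f''_i$, together with $F_0=0$ and $F_{2^n-1}=0$. When $2^m+1\nmid i$ we computed $f'_i=0$, so $F_i=f''_i=\dfrac{1-\alpha^{-i2^{m-1}(2^m-1)(2^{m-1}+1)}}{1-\alpha^{-i2^{m-1}(2^m-1)}}$; when $2^m+1\mid i$ we computed $f''_i=1$, so $F_i=1+f'_i=1+\dfrac{\alpha^{-i}(1-\alpha^{-i(2^{m-1}-1)})}{1-\alpha^{-i}}$. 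This is precisely the displayed expression, so no further work is needed for that half; one only remarks in passing that the denominators occurring are nonzero for $1\le i\le 2^n-2$ — e.g. $1-\alpha^{-i2^{m-1}(2^m-1)}\ne 0$ exactly because $2^m+1\nmid i$, using $\gcd(2^{m-1},2^n-1)=1$ — so the closed forms make sense.

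For the degree I would use $\deg(F)=\max\{\wt_n(i)\mid 1\le i\le 2^n-2,\ F_i\ne 0\}$. Any such $i$ has $\wt_n(i)\le n-1$, since an $n$-bit integer strictly between $0$ and $2^n-1$ cannot be all ones; hence $\deg(F)\le n-1$ automatically (this is also the general bound for balanced functions, and $F$ is balanced by Theorem \ref{T2}). So it suffices to produce one index $i$ with $\wt_n(i)=n-1$ and $F_i\ne 0$, and I would take $i=2^n-2$, whose binary expansion $1\cdots10$ has exactly $n-1$ ones. Since $2^n-2\equiv -2\pmod{2^m+1}$ and $2^m+1$ is odd and larger than $1$, we have $2^m+1\nmid 2^n-2$, so $F_{2^n-2}$ is given by the first branch of the formula, namely $\dfrac{1-\alpha^{-(2^n-2)2^{m-1}(2^m-1)(2^{m-1}+1)}}{1-\alpha^{-(2^n-2)2^{m-1}(2^m-1)}}$.

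It then remains to check that this fraction is nonzero, i.e. that neither its denominator nor its numerator vanishes. Writing $A=(2^n-2)2^{m-1}(2^m-1)$, vanishing of the denominator means $2^n-1\mid A$; cancelling the unit $2^{m-1}$ (coprime to $2^n-1$) and then the common factor $2^m-1$ of $2^n-1=(2^m-1)(2^m+1)$, this reduces to $2^m+1\mid 2^n-2$, which we have just excluded. Vanishing of the numerator means $2^n-1\mid (2^{m-1}+1)A$; the same two cancellations reduce this to $2^m+1\mid(2^{m-1}+1)(2^n-2)$, and since $\gcd(2^{m-1}+1,2^m+1)=1$ — because $2(2^{m-1}+1)-(2^m+1)=1$ — this is again equivalent to $2^m+1\mid 2^n-2$, hence false. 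Therefore $F_{2^n-2}\ne 0$, so $\deg(F)\ge n-1$, and combined with the upper bound $\deg(F)=n-1$, which is the optimum for balanced Boolean functions. The argument is wholly elementary; the only thing that really needs care is the bookkeeping of which divisibilities the closed forms do and do not permit, and in particular the coprimality $\gcd(2^{m-1}+1,2^m+1)=1$, which is exactly what keeps the numerator of $F_{2^n-2}$ from collapsing. I do not anticipate any genuine obstacle.
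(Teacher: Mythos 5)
Your derivation of the coefficient formula is exactly the paper's (it simply collects the geometric-series computations of $f'_i$ and $f''_i$ carried out before the theorem, together with $F_i=f'_i+f''_i$ and $F_0=F_{2^n-1}=0$), and your degree argument --- exhibiting $F_{2^n-2}\neq 0$ by reducing both the numerator and denominator conditions to $2^m+1\mid 2^n-2$ via $\gcd(2^{m-1},2^n-1)=1$ and $\gcd(2^{m-1}+1,2^m+1)=1$ --- is a correct and complete justification of $\deg(F)=n-1$, which the paper asserts without detail. The only blemish is the line $2^n-2\equiv -2\pmod{2^m+1}$: since $2^n\equiv 1\pmod{2^m+1}$, the residue is actually $-1$, but either way $2^m+1\nmid 2^n-2$, so the conclusion stands.
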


\subsection{Nonlinearity }
To determine the lower bound of the nonlinearity of the Boolean
functions in Construction \ref{con2}, we need some necessary
backgrounds.

\begin{defi}[\cite{FiniteField97}]
Let $a \in \mathbb{F}_{2^{m}}$. The binary complete Kloosterman sum
is defined as
\[\mathcal {K}(a)=\sum_{x \in \mathbb{F}_{2^{m}}}(-1)^{\tr_{1}^{m}(1/x+a x)}.\]
\end{defi}

\begin{lem}[\cite{Mbent}]\label{lem2}
Let $a \in \mathbb{F}_{2^{m}}^{*}$ and $U$ be the cyclic group
defined in Section \ref{sec:1}. Then
$$\sum_{z\in U}(-1)^{\tr_{1}^{n}(az)}=1-\mathcal {K}(a).$$
\end{lem}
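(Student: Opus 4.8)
The plan is to evaluate the character sum $\sum_{z\in U}(-1)^{\tr_1^n(az)}$ directly, exploiting the Galois action on $U$. Write $q=2^m$, and let $\psi,\psi_0$ be the canonical additive characters of $\mathbb{F}_{2^n}$ and $\mathbb{F}_{2^m}$, so $\psi(x)=(-1)^{\tr_1^n(x)}$ and $\psi_0(x)=(-1)^{\tr_1^m(x)}$. The starting point is that $U=\{z\in\mathbb{F}_{2^n}^*\mid z^{q+1}=1\}$ is exactly the group of norm-one elements, that $U\cap\mathbb{F}_{2^m}=\{1\}$, and that the remaining $q$ elements of $U$ lie in $\mathbb{F}_{2^n}\setminus\mathbb{F}_{2^m}$ and are permuted by the Frobenius $z\mapsto z^{q}$ in $2^{m-1}$ orbits of size $2$.

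Next I would compute the contribution of one conjugate pair $\{z,z^{q}\}$. Since $a\in\mathbb{F}_{2^m}^*$ and $\tr_1^n$ is invariant under $x\mapsto x^{q}$, we get $\psi(az^{q})=\psi(az)$, so such a pair contributes $2\psi(az)$. Moreover, for $z\in U\setminus\mathbb{F}_{2^m}$ put $s:=z+z^{q}=\tr_m^n(z)\in\mathbb{F}_{2^m}^*$; then $zz^{q}=z^{q+1}=1$, so $z$ is a root of $X^2+sX+1$, and $\tr_1^n(az)=\tr_1^m(\tr_m^n(az))=\tr_1^m(as)$, whence $\psi(az)=\psi_0(as)$. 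Conversely, $X^2+sX+1$ (with $s\in\mathbb{F}_{2^m}^*$) is irreducible over $\mathbb{F}_{2^m}$ — equivalently both its roots lie in $U\setminus\mathbb{F}_{2^m}$ — precisely when $\tr_1^m(1/s^2)=1$, i.e. (by Frobenius-invariance of $\tr_1^m$) when $\tr_1^m(1/s)=1$. This yields a two-to-one correspondence between $U\setminus\{1\}$ and $\{s\in\mathbb{F}_{2^m}^*\mid\tr_1^m(1/s)=1\}$, and hence, the term $z=1$ contributing $1$,
$$\sum_{z\in U}\psi(az)=1+2\sum_{\substack{s\in\mathbb{F}_{2^m}^*\\ \tr_1^m(1/s)=1}}\psi_0(as).$$

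Finally I would linearize the trace condition using $\mathbf{1}[\tr_1^m(u)=1]=\tfrac12(1-\psi_0(u))$, which turns the right-hand side into $1+\sum_{s\in\mathbb{F}_{2^m}^*}\psi_0(as)-\sum_{s\in\mathbb{F}_{2^m}^*}\psi_0(1/s+as)$. The first sum equals $-1$ since $a\neq0$, and the second equals $\mathcal{K}(a)-1$ by the stated definition of the Kloosterman sum (the omitted $x=0$ term being $1$). Adding up gives $1+(-1)-(\mathcal{K}(a)-1)=1-\mathcal{K}(a)$, as claimed.

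I expect the only real subtlety to be the correspondence used in the second paragraph: namely, verifying that $z\mapsto z+z^{q}$ is two-to-one from $U\setminus\{1\}$ onto $\{s:\tr_1^m(1/s)=1\}$, which requires the reducibility criterion for $X^2+sX+1$ over $\mathbb{F}_{2^m}$ (obtained from the substitution $X=sY$, reducing it to $Y^2+Y+1/s^2$) together with the remark that both roots automatically lie in $U$ because their product is $1$. The remaining steps are routine additive-character manipulations, modulo care about the convention $1/0=0$ in $\mathcal{K}(a)$.
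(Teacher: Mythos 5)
Your proof is correct, and every step checks out: the identification $U=\{z:z^{2^m+1}=1\}$ with $U\cap\mathbb{F}_{2^m}=\{1\}$, the Frobenius-invariance giving pairs $\{z,z^{2^m}\}$ with equal character values, the two-to-one correspondence $z\mapsto z+z^{-1}$ from $U\setminus\{1\}$ onto $\{s\in\mathbb{F}_{2^m}^*:\tr_1^m(1/s)=1\}$ via the irreducibility criterion for $X^2+sX+1$, and the final character-sum bookkeeping (including the $x=0$ term of $\mathcal{K}(a)$ under the convention $1/0=0$) all land exactly on $1-\mathcal{K}(a)$. Note, however, that the paper itself gives no proof of this lemma at all --- it is quoted verbatim from Leander's paper on monomial bent functions --- so there is nothing internal to compare against; what you have written is essentially the standard derivation found in that reference (and going back to Lachaud--Wolfmann), namely parametrizing the nontrivial elements of the ``unit circle'' $U$ by their relative traces $s=z+z^{2^m}$ and converting the resulting incomplete sum over trace-one elements into a complete Kloosterman sum. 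Your self-contained write-up is a genuine improvement in completeness over the paper's bare citation; the only point worth stating explicitly in a final version is the convention $1/0=0$ in the definition of $\mathcal{K}$, which you already flag.
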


\begin{lem}[\cite{TDT13}]\label{lem3}
Let $\beta$ be a primitive element of $ \mathbb{F}_{2^{m}}$. Let
$$\Delta_{s}=\{\beta^{s},\beta^{s+1},\ldots,\beta^{2^{m-1}+s-1}\}$$
where $0\leq s< 2^{m}-1$ is an integer. Then
$$\left|\sum_{\gamma\in\Delta_{s}}\left(\mathcal{K}(\gamma)-1\right)\right|< (\frac{\ln2}{\pi}+0.42)2^{m}+1.$$
\end{lem}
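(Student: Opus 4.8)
The plan is to turn the sum of Kloosterman sums into an incomplete character sum over the cyclic group $\mathbb{F}_{2^{m}}^{*}$ and then estimate it by a P\'olya--Vinogradov-type argument, in the spirit of Carlet and Feng's nonlinearity bounds. By the definition of $\mathcal{K}$ one has $\mathcal{K}(\gamma)-1=\sum_{x\in\mathbb{F}_{2^{m}}^{*}}(-1)^{\tr_{1}^{m}(x^{-1}+\gamma x)}$, so interchanging the order of summation yields
\[
\sum_{\gamma\in\Delta_{s}}(\mathcal{K}(\gamma)-1)=\sum_{x\in\mathbb{F}_{2^{m}}^{*}}(-1)^{\tr_{1}^{m}(x^{-1})}\sum_{\gamma\in\Delta_{s}}(-1)^{\tr_{1}^{m}(\gamma x)}.
\]
First I would expand the indicator function of the multiplicative interval $\Delta_{s}$ in the additive characters of $\mathbb{Z}/(2^{m}-1)\mathbb{Z}$: writing $\beta$ for a generator of $\mathbb{F}_{2^{m}}^{*}$, $N=2^{m}-1$ and $\zeta=e^{2\pi i/N}$, one has $\mathbf{1}_{\Delta_{s}}(\beta^{t})=\frac{1}{N}\sum_{j=0}^{N-1}\widehat{\mathbf{1}}(j)\,\zeta^{jt}$ with $\widehat{\mathbf{1}}(0)=2^{m-1}$ and, for $j\neq0$, $|\widehat{\mathbf{1}}(j)|=|\sin(2^{m-1}\pi j/N)|/|\sin(\pi j/N)|$, the modulus of a Dirichlet kernel, since $\Delta_{s}$ is a block of $2^{m-1}$ consecutive powers of $\beta$.

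Substituting this expansion and carrying out the inner sums reduces the whole quantity to $\frac{1}{N}\sum_{j=0}^{N-1}\widehat{\mathbf{1}}(j)\,\theta_{j}$, where $\theta_{j}=\sum_{\gamma\in\mathbb{F}_{2^{m}}^{*}}\chi_{j}(\gamma)(\mathcal{K}(\gamma)-1)$ and $\chi_{j}$ is the multiplicative character determined by $\chi_{j}(\beta)=\zeta^{j}$. The key algebraic point is that $\theta_{0}=\sum_{\gamma\ne0}(\mathcal{K}(\gamma)-1)=2^{m}-(2^{m}-1)=1$ (from $\sum_{\gamma\in\mathbb{F}_{2^{m}}}\mathcal{K}(\gamma)=2^{m}$ and $\mathcal{K}(0)=0$), while for $j\not\equiv0$ one has the Gauss-sum identity $\theta_{j}=G(\chi_{j})^{2}$, where $G(\chi_{j})=\sum_{x\ne0}\chi_{j}(x)(-1)^{\tr_{1}^{m}(x)}$; this comes out by interchanging the two summations defining $\mathcal{K}$, substituting $x\mapsto x^{-1}$, and using $\sum_{x\in\mathbb{F}_{2^{m}}}(-1)^{\tr_{1}^{m}(x)}=0$. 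Since $|G(\chi_{j})|=2^{m/2}$ for every nontrivial $\chi_{j}$ (the classical evaluation of the modulus of a Gauss sum), we get $|\theta_{j}|=2^{m}$ for $j\ne0$, and separating off the $j=0$ term we arrive at
\[
\left|\sum_{\gamma\in\Delta_{s}}(\mathcal{K}(\gamma)-1)\right|\ \le\ \frac{2^{m-1}}{2^{m}-1}+\frac{2^{m}}{2^{m}-1}\sum_{j=1}^{2^{m}-2}\frac{|\sin(2^{m-1}\pi j/(2^{m}-1))|}{|\sin(\pi j/(2^{m}-1))|}.
\]

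Everything then reduces to estimating this Dirichlet-kernel sum. Using $2^{m-1}/(2^{m}-1)=\tfrac12+\tfrac{1}{2(2^{m}-1)}$ and splitting $j$ according to parity, each term $|\widehat{\mathbf{1}}(j)|$ simplifies to one of $\tfrac12\csc(\pi(2k+1)/(2(2^{m}-1)))$ or $\tfrac12\sec(\pi k/(2^{m}-1))$, and summing these — essentially the computation of an $\ell^{1}$-norm (Lebesgue constant) of a Dirichlet kernel on the cyclic group — is done by comparing the discrete sums with integrals of the shape $\int\frac{d\theta}{\sin\theta}$ near their singularities. This produces the constant $\tfrac{\ln2}{\pi}$ in front of the main term, the sum-versus-integral discrepancy and the few largest summands (those coming from indices near the midpoint of $\Delta_{s}$, where $\sin(\pi j/(2^{m}-1))$ is smallest) being absorbed into the additive $0.42$ and the trailing $+1$, which gives the claimed inequality. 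I expect the last step to be the main obstacle: getting \emph{explicit} numerical constants rather than an $O(\cdot)$ bound requires a careful, quantitative comparison of the discrete sums with their defining integrals together with an honest accounting of the boundary terms. The two interchanges of summation, the Gauss-sum identity, and the appeal to the modulus of a Gauss sum are, by contrast, routine.
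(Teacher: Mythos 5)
Your overall architecture is the right one and, for the record, it is the same one used in \cite{TDT13}, from which this lemma is imported without proof in the present paper: writing $\mathcal{K}(\gamma)-1$ as a sum over $\mathbb{F}_{2^m}^*$, detecting the multiplicative interval $\Delta_s$ with the characters $\chi_j$, the evaluations $\theta_0=1$ and $\theta_j=G(\chi_j)^2$ (hence $|\theta_j|=2^m$), and the resulting inequality
\[
\Bigl|\sum_{\gamma\in\Delta_{s}}(\mathcal{K}(\gamma)-1)\Bigr|\ \le\ \frac{2^{m-1}}{2^{m}-1}+\frac{2^{m}}{2^{m}-1}\sum_{j=1}^{2^{m}-2}\frac{|\sin(2^{m-1}\pi j/(2^{m}-1))|}{|\sin(\pi j/(2^{m}-1))|}
\]
are all correct, as is your parity split of the Dirichlet-kernel terms into $\frac12\csc$ and $\frac12\sec$ values.

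The genuine gap is in the very last step, and it is not a matter of constants that can be ``absorbed.'' With $N=2^m-1$, the cosecant block alone satisfies $\sum_{k}\frac12\csc\bigl(\pi(2k+1)/(2N)\bigr)\ge\sum_{k}\frac{N}{\pi(2k+1)}\sim\frac{N}{2\pi}\ln N$, and the secant block contributes the same order again; so the Lebesgue-constant sum is $\Theta(N\log N)$, not $O(N)$. After multiplying by $2^m/N$ the method therefore produces a main term $\frac{\ln 2}{\pi}\,m\,2^{m}$, i.e.\ it proves $\bigl|\sum_{\gamma\in\Delta_s}(\mathcal{K}(\gamma)-1)\bigr|<\bigl(\frac{\ln 2}{\pi}m+0.42\bigr)2^{m}+1$, \emph{with} the factor $m$. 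That $m$-dependent inequality is in fact the one the paper actually invokes in the proof of Theorem \ref{T4} (and the one appearing in \cite{TDT13}); the displayed statement of Lemma \ref{lem3} has simply dropped the $m$, a typo. Your claim that the sum-versus-integral discrepancy and the large central terms go into the additive $0.42$ and the trailing $+1$ is where the argument fails: a $\Theta(m2^m)$ quantity cannot be hidden in $0.42\cdot2^m+1$. Carrying out your own integral comparison honestly lands you on the corrected ($m$-dependent) bound; the $m$-free bound is not reachable by a P\'olya--Vinogradov argument, and on square-root-cancellation heuristics for a sum of $2^{m-1}$ Kloosterman sums of individual size up to $2^{m/2+1}$ it is unlikely to be true at all.
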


\begin{thm}\label{T4}
Let $F$ be the Boolean function defined in Construction \ref{con2}.
Then
$$\mathcal{N}_{F}> 2^{n-1}-(\frac{\ln2}{\pi}m+0.92)2^{m}-1.$$
\end{thm}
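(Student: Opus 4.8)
The plan is to bound the Walsh transform $W_F(\lambda)$ for every $\lambda\in\mathbb{F}_{2^n}$ and then invoke $\mathcal{N}_F = 2^{n-1}-\frac12\max_\lambda|W_F(\lambda)|$. Writing $\Supp(F) = (\Gamma\times U)\cup(\{1\}\times\{1,\xi,\ldots,\xi^{2^{m-1}}\})$ with $\Gamma=\{\beta,\beta^2,\ldots,\beta^{2^{m-1}-1}\}$, I would expand
$$W_F(\lambda)=\sum_{x\in\mathbb{F}_{2^n}}(-1)^{F(x)+\tr_1^n(\lambda x)} = 2^n - 2\sum_{x\in\Supp(F)}(-1)^{\tr_1^n(\lambda x)}\cdot(-1)\ \ \text{— more precisely } W_F(\lambda)=\sum_{x\notin\Supp(F)}(-1)^{\tr_1^n(\lambda x)}-\sum_{x\in\Supp(F)}(-1)^{\tr_1^n(\lambda x)},$$
so that $W_F(\lambda) = \widehat{\chi}(\lambda) - 2\,T(\lambda)$ where $\widehat{\chi}(\lambda)=\sum_{x\in\mathbb{F}_{2^n}}(-1)^{\tr_1^n(\lambda x)}$ is $2^n$ for $\lambda=0$ and $0$ otherwise, and $T(\lambda)=\sum_{x\in\Supp(F)}(-1)^{\tr_1^n(\lambda x)}$. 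So for $\lambda\neq 0$ we just need $|T(\lambda)|$ small, and for $\lambda=0$ balancedness already gives $W_F(0)=0$. Now split $T(\lambda) = \sum_{\gamma\in\Gamma}\sum_{z\in U}(-1)^{\tr_1^n(\lambda\gamma z)} + \sum_{k=0}^{2^{m-1}}(-1)^{\tr_1^n(\lambda\xi^k)}$. The second sum is a partial character sum over the cyclic group $U$ of order $2^m+1$, hence trivially bounded in absolute value by $2^{m-1}+1$ (or one can do slightly better, but this crude bound suffices since it contributes only an $O(2^m)$ term).

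For the main term, fix $\gamma\in\Gamma$ and apply Lemma~\ref{lem2}: since $\lambda\gamma\in\mathbb{F}_{2^n}^*$, we have $\sum_{z\in U}(-1)^{\tr_1^n(\lambda\gamma z)} = 1-\mathcal{K}(\mathrm{N}(\lambda\gamma))$ or, more carefully, the lemma as stated applies when the argument lies in $\mathbb{F}_{2^m}^*$; I would need the general version (as in \cite{Mbent,mesnager2011}) giving $\sum_{z\in U}(-1)^{\tr_1^n(az)} = 1-\mathcal{K}(a^{2^m+1})$ for $a\in\mathbb{F}_{2^n}^*$. Hmm — the cleaner route, and the one matching Lemma~\ref{lem3}, is to note $\tr_1^n(\lambda\gamma z)$ depends on $\gamma$ only through $\gamma$ itself and that as $\gamma$ ranges over $\Gamma$ and we sum, after applying Lemma~\ref{lem2} in the form $\sum_{z\in U}(-1)^{\tr_1^n(a z)} = 1-\mathcal{K}(a\cdot a^{2^m})$ we get $\sum_{\gamma\in\Gamma}\big(1-\mathcal{K}(\mu\beta^t)\big)$ type expressions where $\mu = \lambda^{2^m+1}\in\mathbb{F}_{2^m}^*$ is fixed and $\beta^t$ ranges over the consecutive powers making up $\Gamma$ (using $\Gamma = \{\beta,\ldots,\beta^{2^{m-1}-1}\}$, a consecutive run, possibly after translating by a fixed power absorbed into $\mu$). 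Then $\mu\Gamma$ is again a set of $2^{m-1}-1$ consecutive powers of $\beta$, i.e.\ a set of the form $\Delta_s$ minus one element, so Lemma~\ref{lem3} gives $\big|\sum_{\gamma\in\Gamma}(\mathcal{K}(\mu\gamma)-1)\big| < (\frac{\ln 2}{\pi}+0.42)2^m+1 + |\mathcal{K}(\text{one term})-1| < (\frac{\ln 2}{\pi}+0.42)2^m+1 + (2^{m/2+1}+1)$, the last from the Weil bound $|\mathcal{K}(a)|\le 2^{m/2+1}$.

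Combining: $|T(\lambda)| \le |{\textstyle\sum_{\gamma\in\Gamma}}(\mathcal{K}(\mu\gamma)-1)| + |\Gamma| + (2^{m-1}+1)$, which after collecting constants is at most $(\frac{\ln 2}{\pi}+0.42)2^m + 2^{m-1}+2^{m-1} + O(2^{m/2}) = (\frac{\ln 2}{\pi}m + 0.92)2^m$-ish — here I'd have to track the constants to land exactly on $(\frac{\ln2}{\pi}m+0.92)2^m+1$; note the factor $m$ in the statement is suspicious since Lemma~\ref{lem3} has no $m$, so likely the intended bound sums over something $m$-many times or there is a typo, and I would reconcile this when doing the arithmetic. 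Then $\mathcal{N}_F = 2^{n-1}-\frac12\max_\lambda|W_F(\lambda)| \ge 2^{n-1} - \max_{\lambda\neq 0}|T(\lambda)| > 2^{n-1} - (\frac{\ln 2}{\pi}m+0.92)2^m-1$. \textbf{The main obstacle} is bookkeeping the constants so that the three pieces — the Lemma~\ref{lem3} bound on the Kloosterman sum over $\Gamma$, the cardinality $|\Gamma|$, and the partial sum over the short orbit in $U$ — add up to exactly the claimed constant $0.92$ (and clarifying the role of $m$), together with being careful that $\Gamma$ really is a run of consecutive powers so that Lemma~\ref{lem3} is applicable after the multiplicative shift by $\mu=\lambda^{2^m+1}$.
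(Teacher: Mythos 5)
Your overall strategy is exactly the paper's: reduce to bounding $T(a)=\sum_{x\in\Supp(F)}(-1)^{\tr_1^n(ax)}$, convert the sum over $\Gamma\times U$ into Kloosterman sums via Lemma~\ref{lem2} (after writing $a=a_1a_2$ with $a_1\in\mathbb{F}_{2^m}^*$, $a_2\in U$, so that $a_2U=U$ and the argument of $\mathcal{K}$ lands in $\mathbb{F}_{2^m}^*$), apply Lemma~\ref{lem3} to the resulting sum of $\mathcal{K}(a_1\gamma)-1$ over consecutive powers of $\beta$, and bound the leftover short sum over $U$ trivially. Your suspicion about the factor $m$ is also well founded: Lemma~\ref{lem3} as printed omits the $m$, but the proof of the theorem (and the correct Tang--Carlet--Tang estimate) uses $(\frac{\ln 2}{\pi}m+0.42)2^m+1$.

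The one genuine gap is the constant bookkeeping that you yourself flag as the main obstacle, and your proposed patch does not land on the stated bound. Since $\Gamma=\{\beta,\ldots,\beta^{2^{m-1}-1}\}$ has only $2^{m-1}-1$ elements, $a_1\Gamma$ is a run $\Delta_s$ with one element removed, and repairing this with a Weil bound on the missing Kloosterman term costs an extra $2^{m/2+1}+1$, which overshoots $(\frac{\ln2}{\pi}m+0.92)2^m+1$. (Your intermediate inequality also carries a spurious $+|\Gamma|$: the quantity you need to bound is $\bigl|\sum_{\gamma\in\Gamma}(1-\mathcal{K}(a_1\gamma))\bigr|$, which is already $\bigl|\sum_{\gamma\in\Gamma}(\mathcal{K}(a_1\gamma)-1)\bigr|$ with no separate $|\Gamma|$ term.) The paper's trick is to complete $\Gamma$ to $\Gamma'=\{1\}\cup\Gamma$, which is exactly a run of $2^{m-1}$ consecutive powers so that Lemma~\ref{lem3} applies with no correction, and then to observe that the compensating term $-\sum_{z\in U}(-1)^{\tr_1^n(a_1z)}$ merges with the sum over $\Lambda'=a_2\Lambda$ to give $-\sum_{z\in U\setminus\Lambda'}(-1)^{\tr_1^n(a_1z)}$, bounded by $|U\setminus\Lambda'|=2^{m-1}$; the correction is thus absorbed at no extra cost. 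This yields $|W_F(a)|<2[(\frac{\ln2}{\pi}m+0.42)2^m+1+2^{m-1}]$ and hence exactly the constant $0.42+0.5=0.92$ and the trailing $-1$ in the statement.
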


\begin{proof}
We denote the set $\{1,\xi,\ldots,\xi^{2^{m-1}}\} $ by $\Lambda$.
Obviously, $W_{F}(0)=0$ since $F$ is balanced.

For any $a \in \mathbb{F}_{2^{n}}^{*}$, we assume $a=a_{1}a_{2}$
where $a_{1} \in \mathbb{F}_{2^{m}}^{*}$, $a_{2}\in U$. By Lemma
\ref{lem2}, we have
\begin{eqnarray*}
 W_{F}(a)&=& -2\sum_{x\in \Sup(F)}(-1)^{\tr_{1}^{n}(ax)}\\
         &=& -2\left[\sum_{y\in\Gamma}\sum_{z\in U}(-1)^{\tr_{1}^{n}(a_{1}ya_{2}z)}+\sum_{z \in
         \Lambda}(-1)^{\tr_{1}^{n}(a_{1}a_{2}z)}\right]\\
         &=&-2\left[\sum_{y\in \Gamma'}(1-\mathcal {K}(a_{1}y))-\sum_{z\in
         U}(-1)^{\tr_{1}^{n}(a_{1}z)}+\sum_{z\in\Lambda'}(-1)^{\tr_{1}^{n}(a_{1}z)}\right]\\
         &=&-2\left[\sum_{y\in \Gamma'}(1-\mathcal {K}(a_{1}y))-\sum_{z\in U \setminus
         \Lambda'}(-1)^{\tr_{1}^{n}(a_{1}z)}\right],
\end{eqnarray*}
where $\Lambda'=\{a_2,a_2\xi,\ldots,a_2\xi^{2^{m-1}}\}$, $\Gamma'=\{
1,\beta,\ldots,\beta^{2^{m-1}-1}\}$.

Since $a_{1}\in \mathbb{F}_{2^{m}}^{*}$, it can be represented as
$a_{1}=\beta^{s}$ for some  $0\leq s \leq 2^{m}-2$. Then
$$\sum_{y \in \Gamma'}(\mathcal
{K}(a_{1}y)-1)=\sum_{\gamma\in\Delta_{s}}(\mathcal {K}(\gamma)-1),$$
where
$\Delta_{s}=\{\beta^{s},\beta^{s+1},\ldots,\beta^{s+2^{m-1}-1}\}$.
By Lemma \ref{lem3}, we know that
$$\left|\sum_{y \in \Gamma'}(1-\mathcal
{K}(a_{1}y)\right|<\left(\frac{\ln2}{\pi}m+0.42\right)2^{m}+1.$$
Therefore,
$$|\,W_{F}(a)\,|<2\left[\left(\frac{\ln2}{\pi}m+0.42\right)2^{m}+1+2^{m-1}\right].$$
Finally we get
\begin{eqnarray*}
\mathcal{N}_{F} &=& 2^{n-1}-\frac{1}{2}\max_{a\in\mathbb{F}_{2^n}} |W_{f}(a)| \\
                &>&
                2^{n-1}-\left(\frac{\ln2}{\pi}m+0.42\right)2^{m}-2^{m-1}-1\\
                &=&
                2^{n-1}-\left(\frac{\ln2}{\pi}m+0.92\right)2^{m}-1.
\end{eqnarray*}
\qedd
\end{proof}

In fact, the lower bound in Theorem \ref{T4} is not satisfactory at
all since  we have used the naive estimation
\[\left|\sum_{z\in U \setminus\Lambda'}(-1)^{\tr_{1}^{n}(a_{1}z)}\right|\leq 2^{m-1}\]
in the proof. Hence it is not so safe to say that the function $F$
has good nonlinearity. Nevertheless, for these $n$ we can compute
the exact value of nonlinearity, it appears good.

Denote by $\mathcal {N}_{C-F}$, $\mathcal {N}_{T-C-T}$ and $\mathcal
{N}_{F}$  the exact values of nonlinearity of the Carlet-Feng
functions \cite{CF08}, the Tang-Carlet-Tang functions \cite{TDT13}
and the functions in Construction \ref{con2} respectively. By a
Magma program, we investigate the exact values of  nonlinearity for
small number of variables under the choice of the default primitive
element of $\mathbb{F}_{2^{n}}$ in Magma system. The results are
displayed in Table \ref{table2}. By the comparison, we find our
functions almost play as well as the Carlet-Feng and
Tang-Carlet-Tang functions.

\begin{table}[H]
\caption{Comparison of the exact values of  Nonlinearity with some
known constructions}\label{table2}
\begin{center}
\begin{tabular}{lllll}
\hline\noalign{\smallskip}
$n$ & $\mathcal {N}_{C-F}$ & $\mathcal {N}_{T-C-T}$ & $\mathcal {N}_{F}$ &  $2^{n-1}-2^{\frac{n}{2}-1}$\\
\noalign{\smallskip}\hline\noalign{\smallskip}
4 & 4 & 4 &  4 & 6 \\

6 & 24 & 22 & 22 & 28 \\

8 & 112 & 108 & 108 & 120 \\

10 & 478 & 476  & 474 & 496 \\

12 & 1970 & 1982 & 1976 & 2016 \\

14 & 8036 & 8028 & 8026 & 8128 \\

16 & 32530 & 32508 & 32498 & 32540 \\

18 & 130442 & 130504 & 130484 & 130812 \\

20 & 523154 & 523144 & 523122 & 523776 \\
\noalign{\smallskip}\hline
\end{tabular}
\end{center}
\end{table}

\normalsize

To obtain better estimation of the nonlinearity of the functions in
Construction \ref{con2}, the key difficulty is to estimate such
exponential sums as
\[\Phi_s=\sum_{x\in\{\xi^s,\xi^{s+1},\ldots,\xi^{s+2^{m-1}-1}\}}(-1)^{\tr^n_1(cx)}\]
for any $0\leq s\leq 2^m$ and $c\in\mathbb{F}_{2^m}^*$, where $\xi$
is a generator of the cyclic group $U$. Unfortunately, the standard
technique of using Gauss sums would not work for this kind of
incomplete exponential sums over finite fields. Maybe more advanced
number theoretic tools should be introduced to overcome this
difficulty. Though we have not found them up to present, we
conjecture that $|\Phi_s|=O(2^{\frac{m}{2}})$.

\subsection{Immunity against fast algebraic attacks}

The property of optimal algebraic immunity is a necessary but not
sufficient condition for a Boolean function because of the existence
of fast algebraic attacks. In this subsection, we analyze the
ability of the Boolean functions in Construction \ref{con2} against
 fast algebraic attacks.

An $n$-variable Boolean function $f$ is optimal with respect to fast
algebraic attacks if for any pair of integers $(e,d)$ such that
$e+d<n$ and $e<n/2$, there do not exist a function $g \neq 0$ of
algebraic degree at most $e$ such that $fg$ has degree at most $d$
\cite{FAA03}. Armknecht et.al. proposed an efficient algorithm
\cite{FC06} to determine the existence of $g$ and $h$ with
corresponding degrees. Based on Algorithm 2 in \cite{FC06}, we
investigate the behavior of the functions in Construction \ref{con2}
against fast algebraic attacks for small number of variables by a
Magma program.

We choose the default primitive element of $ \mathbb{F}_{2^{n}}$ in
the Magma system. For even $n$ ranging from 4 to 14 and
$e<\frac{n}{2}$, we can find the pairs $(e,d)$ with $e+d \geq n-1$,
but the pairs $(e,d)$ such that $e+d \leq n-2$ have never been
observed. That implies that the functions in Construction \ref{con2}
have good immunity to fast algebraic attacks though they are not
optimal.

\section{Concluding remarks}

In this paper, based on polar decomposition of multiplicative groups
of quadratic extensions of finite fields, we construct two classes
of  algebraic immunity optimal Boolean functions. We find that the
second class of Boolean functions possess almost all the necessary
properties to be used as filter functions in  stream ciphers.

In fact, in the proof of Theorem \ref{T2}, no property of the set
$\Lambda=\{1,\xi,\ldots,\xi^{2^{m-1}}\}$ has been used except the
cardinality of it. Therefore, we can construct balanced OAI Boolean
functions by setting $\Supp(F)=(\Gamma\times U) \cup
(\{1\}\times\Lambda')$ for any subset $\Lambda'$ of $U$ satisfying
$|\Lambda'|=2^{m-1}+1$. Then we have more opportunities to get
balanced OAI Boolean functions with high nonlinearity. However,
univariate representations and algebraic degrees of functions
constructed using $\Lambda'$ with no special properties would be
difficult to describe.


\end{document}